\documentclass[10pt]{article}

\usepackage{amssymb} 
\usepackage{latexsym} 
\usepackage{amsmath}
\usepackage{color}
\usepackage{epsfig}
\usepackage[mathcal]{eucal}
\usepackage{enumerate}
\usepackage{anysize}
\usepackage[english]{babel}
\usepackage{bbm}
\usepackage{hyperref}

%

\newtheorem{theorem}{Theorem}
\newtheorem{corollary}[theorem]{Corollary}

\newenvironment{proof}{\noindent{\bf Proof.}}%
               {\hspace*{\fill}$\Box$\par\vspace{4mm}}


\def\cadre{$$\vcenter\bgroup\advance\hsize by -2em\noindent
             \refstepcounter{equation}(\theequation)~\ignorespaces}
\makeatletter
\def\endcadre{\egroup\eqno$$\global\@ignoretrue}
\def\ncadre{$$\vcenter\bgroup\advance\hsize by -2em\noindent
             \ignorespaces}
\makeatletter
\def\endncadre{\egroup\eqno$$\global\@ignoretrue}


\newcommand{\comment}[1]{}
\newcommand{\mybreak} {\par\vspace{2mm}\noindent}





\newcommand{\introd}          {\mbox{introd}}

\makeatletter
\def\imod#1{\allowbreak\mkern10mu({\operator@font mod}\,\,#1)}
\makeatother

\newcommand{\B} {\mathcal{B}}
\newcommand{\LL} {\mathcal{L}}
\newcommand{\LLC} {\mathcal{L}^\circ}

\newcommand{\Hasse}{\mathbf{H}}
\newcommand{\HasseC}{\mathbf{H}^\circ}
\newcommand{\maxbic} {\B}

\newcommand{\twin} {\mbox{twin}}
\newcommand{\pend} {\mbox{pend}}

\newcommand{\algofast}{\texttt{FastComputeBDHDiagram}}
\newcommand{\algo}{\texttt{ComputeBDHDiagram}}

\newcounter{progcount}
\newcounter{linecount}[progcount]

\newcommand{\N}{\refstepcounter{linecount}\thelinecount. \>}
\newcommand{\NL}[1]{\refstepcounter{linecount}\thelinecount. \label{#1}\>}

\newenvironment{prog}[1]{
    \refstepcounter{progcount}\label{#1}
    \par\vspace{0.5ex}\noindent\hspace{1ex}
    \begin{minipage}{\linewidth}
    \small
    \begin{tabbing}
    =spa\=spa\=spa\=spa\=spa\=spa\=spa\=spa\=spa\=spa\=spa\=spa\=\kill
}%
{
    \end{tabbing}
    \end{minipage}\\[0.5ex]
}

\newcommand{\key}[1]{\textbf{#1~}}\ignorespaces

%

\begin{document}
\pagestyle{plain}

\title{On Computing the Galois Lattice of Bipartite Distance Hereditary Graphs}

\author{Nicola Apollonio\footnote{Istituto per le Applicazioni del
Calcolo, M. Picone, v. dei Taurini 19, 00185 Roma, Italy.
\texttt{nicola.apollonio@cnr.it}} \and
{Paolo Giulio Franciosa\footnote{Dipartimento di Scienze Statistiche,
Sapienza Universit\`a di Roma, p.le Aldo Moro 5, 00185
Roma, Italy. \texttt{paolo.franciosa@uniroma1.it}. Partially supported by the Italian Ministry of Education,
University, and Research (MIUR) under PRIN 2012C4E3KT national
research project ``AMANDA -- Algorithmics for MAssive and Networked
DAta''.}}
}

%

\maketitle

\begin{abstract} The class of Bipartite Distance Hereditary (BDH) graphs is the intersection between bipartite domino-free and chordal bipartite graphs.\ Graphs in both the latter classes have linearly many maximal bicliques, implying the existence of polynomial-time algorithms for computing the associated Galois lattice.\ Such a lattice can indeed be built in $O(m\times n)$ worst case-time for a domino-free graph with $m$ edges and $n$ vertices.\ In this paper we give a sharp estimate on the number of the maximal bicliques of BDH graphs and exploit such result to give an $O(m)$ worst case time algorithm for computing the Galois lattice of BDH graphs. By relying on the fact that neighborhoods of vertices of BDH graphs can be realized as directed paths in a arborescence, we give an $O(n)$ worst-case space and time encoding of both the input graph and its Galois lattice, provided that the reverse of a Bandelt and Mulder building sequence is given.
\\
\end{abstract}

\noindent \textbf{Keywords}: Bipartite graphs,
distance hereditary graphs, maximal bicliques, Galois lattices.

\section{Introduction}\label{sec:intro}
Enumerating inclusion-wise maximal vertex-sets of complete bipartite subgraphs (maximal bicliques) in bipartite graphs is a challenging theoretical and computational problem~\cite{Eppstein,AACFHS,MakinoUno}  related to several classical problems in combinatorial optimization, theoretical computer
science~\cite{Amil,CornazFon,Agarwal,Berry07} and bioinformatics~\cite{Liliuliwong,Zhang} (and the references cited therein). The problem has been shown to be \#P-complete by Kuztnetsov~\cite{kuz} and there have been active efforts to bound and estimate the number of maximal bicliques as well as efficiently computing and listing such bicliques both in general and in restricted classes of bipartite graphs~\cite{prisner,Amil}. There are two non-trivial classes of bipartite graphs admitting polynomially many maximal bicliques: the class of bipartite domino-free graphs~\cite{Amil} and the class of $C_6$-free graphs~\cite{prisner} (in particular, the class of chordal-bipartite graphs): $O(m)$ in the former case, $m$ being the size of the graph, and $O((n_1\times n_2)^2)$ in the latter case, $n_1$ and $n_2$ being the number of vertices in the two color classes. In these cases, the interest is clearly on designing efficient algorithms to count the number of maximal bicliques, list all the maximal bicliques, and solving related computations. However, besides its own interest, what makes the problem even more appealing even in special cases, is the intimate relationship with the problem of building \emph{concept lattices} (also known as Galois lattices) of a formal context in Formal Concept Analysis, a well established (though still flourishing) topic in Applied Lattice Theory~\cite{gw}.

For our purposes, a formal context is a bipartite graph $G$ with color classes $X$ and $Y$. In Formal Concept Analysis, $X$ is interpreted as a set of objects and $Y$ as a set of attributes, while $G$ encodes the incidence binary relation between attributes and objects: object $x\in X$ has attribute $y\in Y$ if and only if $xy$ is an edge of $G$. A formal concept is an ordered pair $(X_0,Y_0)$, where $X_0$ is a subset of objects, $Y_0$ is a subset of attributes, and all the objects in $X_0$ share all the attributes in $Y_0$ in such a way that any other object $x\in X\setminus X_0$ fails to have at least one of the attributes in $Y_0$ and any other attribute $y\in Y\setminus Y_0$ is not possessed by at least one object in $X_0$. The sets $X_0$ and $Y_0$ are called the intent and the extent of the formal concept. Concepts can be (partially) ordered from the more specific to the more general: the more objects share a common set of attributes the less specific is the concept, e.g.\ ``mammal'' is less specific than ``dog'', the extent of the concept ``mammal'' contains the extent of ``dog'' as well as the extent of ``cats'' for instance, and dually the intent of ``mammal'', namely the set of attributes defining ``mammal'', is contained in the intent of ``dog''. It is convenient to assume the existence of the most specif concept of a context, namely the concept whose intent is the set of all attributes, as well as the most general concept, namely the concept whose extent consists of all objects. As proved by Ganter and Wille, according to the basic Theorem of concept lattices, the set of formal concepts of a given context, hierarchically ordered, is actually a lattice called the \emph{concept lattice} of the context $G$ (also known as the Galois lattice of $G$), with the most specific and the most general concepts as bottom and top, respectively.

From a graph-theoretical point of view, formal concepts can be identified with the maximal bicliques $B$ of $G$, hence if $\B(G)$ denotes the collection of the maximal bicliques of $G$, then $\mathcal{L}(G)=(\maxbic(G)\cup\{\bot,\top\},\preceq)$ is the Galois lattice of $G$, where, $\bot$ and $\top$ are two dummy maximal bicliques consisting, respectively, of the color class $X$ alone and the color class $Y$ alone (unless there are universal vertices in $G$) and the partial order $\preceq$ is defined by
$$B\preceq B' \Leftrightarrow X\cap B\subseteq X\cap B'.$$
Equivalently, the same partial order can be defined as
$$B\preceq B' \Leftrightarrow Y\cap B\supseteq Y\cap B'$$
since $X\cap B\subseteq X\cap B' \Leftrightarrow Y\cap B\supseteq Y\cap B'$ for any pair of maximal bicliques $B$ and $B'$.

Hence, with any bipartite graph there is an associated lattice on its collection of maximal bicliques and the shape of such a lattice can be characteristic of particular classes of bipartite graphs. 

For instance \emph{Bipartite Distance Hereditary} graphs (BDH for shortness) have been investigated in \cite{acfDAM,acfIWOCA}. Recall that a graph is \emph{Distance Hereditary} if the distance between any two of its vertices is the same in every connected induced subgraph containing them. A graph is \emph{Bipartite Distance Hereditary} if it is both bipartite and distance hereditary.

In \cite{acfDAM}, BDH graphs have been characterized as the class of bipartite graphs whose Galois lattice is tree-like. More precisely, it has been shown that the Hasse diagram $\HasseC(G)$ of the poset obtained by removing the top and bottom elements from the Galois lattice $\LL(G)$ of a bipartite graph $G$ is a tree if and only if $G$ is a BDH graph. This implies that the linear dimension of the Galois lattice of a BDH graph is at most 2. Anyway, no efficient algorithms for computing the Galois lattice of a BDH graph have been proposed, though special classes of graphs inducing efficiently computable Galois lattices (much more efficiently than in the general case) have been investigated \cite{Amil,BerryMSS06}. In particular, an $O(m\times n)$ worst case-time algorithm  has been given in \cite{Amil}  for computing the Galois lattice for the more general class of domino-free graphs with $m$ edges and $n$ vertices.
\mybreak
Bandelt and Mulder \cite{bm} proved that BDH graphs are exactly all the graphs that can be constructed starting from a single vertex by a sequence of adding
pending vertices and false twins of existing vertices. This is a special case of what happens for (not necessarily bipartite) distance hereditary graphs, that are characterized as  graphs that can be built starting from a single vertex and a sequence of additions of pending vertices, false twins and true twins (see Section~\ref{sec:prelim} for the definition of false and true twins). This sequence is referred to as an \emph{admissible sequence} in~\cite{bm}, and it is the reverse of what is called a \emph{pruning sequence} in \cite{hammer}. Damiand \emph{et al.}~\cite{damiand} proposed
an optimal $O(m)$ worst case time algorithm for computing a pruning sequence of a distance hereditary graph $G$, where $m$ is the number of edges in $G$, using a cograph recognition algorithm in~\cite{corneil}. Obviously, the same algorithm computes a pruning sequence of a BDH graph.
\mybreak
In this paper we show that, for any BDH graph $G$ with $n$ vertices and $m$ edges:
\begin{itemize}
\item $G$ contains at most $n-2$ maximal bicliques. This improves, for BDH graphs, the more general $O(m)$ bound given in~\cite{Amil} for domino-free bipartite graphs and the $O(n^4)$ bound in~\cite{prisner} for $C_6$-free graphs;

\item the total size of $\LL(G)$, i.e., the sum of the number of vertices over all maximal bicliques of $G$, is $O(m)$.

\item it is possible to compute $\HasseC(G)$, i.e., the Hasse diagram of the Galois lattice of $G$ in worst case time $O(m)$. This improves by a factor of $n$ (the number of vertices of $G$) the $O(m\times n)$ worst case time algorithm given in \cite{Amil} for the larger class of domino-free graphs.\ The construction we propose also finds meet-irreducible and join-irreducible elements in the Galois lattice, also known as \emph{introducers} (see next section), and provides an explicit representation of all maximal bicliques in $G$.
This result is based on a simpler constructive proof that BDH graphs have a tree-like Galois lattice (i.e., the \emph{if part} of the characterization in~\cite{acfDAM}).

\item it is possible to compute an arborescence $A$ such that $G$ is the arcs/paths incidence graph of a set of paths in $A$. The arborescence can be computed in worst case time $O(n)$, starting from a pruning sequence of $G$, and gives an $O(n)$ space representation of both neighborhoods in $G$ and maximal bicliques. This result provides a simpler and constructive proof of the maximal bicliques encoding proposed in~\cite{acfDAM};

\item relying on the arborescence representation above, it is possible to compute an $O(n)$ space representation of  $\HasseC(G)$.\ The compact representation is obtained in $O(n)$ time starting from a pruning sequence of $G$, yielding an overall $O(m+n)$ algorithm to compute $\HasseC(G)$.
\end{itemize}

\section{Definitions and preliminaries}\label{sec:prelim}
Graphs dealt with in this paper are simple (no loops nor parallel edges).\ The neighborhood in $G=(V,E)$ of a vertex $v \in V$ is the set $N_G(v) = \{u\ |\ uv \in E\}$, and the number of vertices in $N_G(v)$ is denoted by $\deg_G(v)$ and it is called the \emph{degree of $v$ in $G$}.
A vertex $v$ is said to be a \emph{pending vertex in $G$} if $\deg_G(v) = 1$.
A vertex $v$ is said to be a \emph{false twin} \emph{in $G$} if a vertex $u \not= v$ exists such that $N_G(v) = N_G(u)$ and $v \not\in N_G(u)$, while $v$ is said to be a \emph{true twin} \emph{in $G$} if a vertex $u \not= v$ exists such that $N_G(v) = N_G(u)$ and $v \in N_G(u)$. Since we are dealing with bipartite graphs, and bipartite graphs cannot contain true twins, we will refer to false twins simply by \emph{twins}.
For ease of notation, we usually omit the subscript referring to $G$ when no confusion can arise. Occasionally, we denote the edge-set (arc-set) of a (directed) graph $G$ by $E(G)$.

An \emph{arborescence} is a directed tree with a
single special node distinguished as the \emph{root} such that,
for each other vertex, there is a dipath from the root to that
vertex. An arborescence $T$ induces a partial order $\leqslant_T$ on $E(T)$, the \emph{arborescence order}, as follows: $e\leqslant_T f$ if the unique path from the root of $T$ which ends with $f$ contains $e$. So we can think of $T$ as the partially ordered set $(E(T),\leqslant_T)$. The arborescence order allows us to identify paths $F$ of $T$ with intervals of the form $[\alpha(F),\beta(F)]$, where $\alpha(F)$ is the arc of $F$ closest to the root of $T$ (the $\leqslant_T$-least element of $F$) and $\beta(F)$ is the arc of $F$ farthest from the root of $T$ (the $\leqslant_T$-greatest element of $F$).

The color classes of a bipartite graph $G$ are referred to as the \emph{shores} of $G$. If the bipartite graph $G$ has shores $X$ and $Y$, we denote such a graph by $G=(X,Y,E)$. A \emph{complete bipartite graph} is a bipartite graph $(X,Y,E)$ where edge $xy \in E$ for each $x \in X$, $y\in Y$.
A \emph{biclique} $B$ in $G$ is a set of vertices of $G$ that induces a complete bipartite subgraph $(X',Y',E')$ with $X' \not= \emptyset$ and $Y' \not= \emptyset$. Such a biclique will be identified with the pair $(X',Y')$ of the shores of the graph it induces, and the shores of a biclique $B$ will be denoted by $X(B)$ and $Y(B)$. A biclique in $G$ is a \emph{maximal biclique} if it is not properly contained in any biclique of $G$.
\mybreak
The \emph{transitive reduction} of a partially ordered set $(S,\leqslant)$ is the directed acyclic graph on
$S$ where there is an arc leaving $x\in S$ and entering $y\in S$ if and
only if $x\leqslant y$ and there is no $z\in S \setminus \{x,y\}$ such that $x\leqslant z\leqslant y$.  With some abuse of terminology, we refer to the transitive reduction of a partially ordered set as to its \emph{Hasse diagram}. Arcs $(x,y)$ in the Hasse diagram of $(S,\leqslant)$, will be denoted by $x\dot < y$ and will be referred to as \emph{cover pairs}. If $x\dot < y$ is a cover pair in the Hasse diagram of some poset we say that $y$ \emph{covers} $x$ or that $x$ \emph{is covered} by $y$.
\mybreak
As above, the set of maximal bicliques in $G$ is denoted by $\B(G)$, and $\mathcal{L}(G)=(\maxbic(G)\cup\{\bot,\top\},\preceq)$ is the associated Galois lattice. Moreover, we let $\mathcal{L}^\circ(G)=(\maxbic(G),\preceq)$, namely, $\mathcal{L}^\circ(G)$ is the partially ordered set obtained from $\mathcal{L}(G)$ after removing $\bot$ and $\top$. The symbols $\Hasse(G)$ and $\HasseC(G)$ denote the Hasse diagrams of $\mathcal{L}(G)$ and $\mathcal{L}^\circ(G)$, respectively.
\mybreak
It is known (see~\cite{gw}) that for any vertex $v \in X$ (resp., $v \in Y$) of a bipartite graph $G=(X,Y,E)$ there is a maximal biclique in $G$ (hence an element in $\B(G)$) of the form $(\bigcap_{x \in N(v)}N(x), N(v))$ (resp., $(N(v), \bigcap_{x \in N(v)}N(x))$). Conforming to concept lattice terminology, such an element of $\LL(G)$ is referred to as \emph{object concept} (resp., \emph{attribute concept}), while it is called the \emph{introducer} of $v$ in~\cite{Berry13}. The introducer of $v$ is the lowest (resp., highest) maximal biclique containing $v$ in $\LL(G)$, and will be denoted by $\introd(v)$. It can also be shown that irreducible elements in $\LL(G)$ are introducers---recall that in a partially ordered set (in particular in a lattice)  an element $r$ is \emph{meet-irreducible} (resp., \emph{join-irreducible}) if $r$ is not the least upper bound (resp., the greatest lower bound) of any two other distinct elements $s$ and $t$-- however, we do not use such notions here. 
\mybreak
Given a BDH graph $G$, we assume that the reverse of a pruning sequence for $G$ has been computed as in~\cite{bm}, for example applying the $O(m)$ algorithm in~\cite{damiand}. Hence, we know that $G$ can be built starting from a sigle vertex $v_1$, and adding a sequence of pending vertices and twin vertices $v_2, v_3, \ldots, v_n$. For $1 \leqslant i \leqslant n$, we denote by $G_i$ the subgraph of $G$ induced by $v_1, v_2, \ldots, v_i$.
The neighborhood of a vertex $v_j$ in $G_i$, for $j \leqslant i$, is denoted by $N_i(v_j)$, and the degree of $v_j$ in $G_i$ is denoted by $\deg_i(v_j) = |N_i(v_j)|$.
The number of maximal bicliques in $G_i$ containing vertex $v$ is denoted by $b_i(v)$. 

Actually, the reverse of the pruning sequence in \cite{damiand,hammer} is defined for (not necessarily bipartite) distance hereditary graphs, and consists in a sequence $S = [s_2, s_3, \ldots, s_{n}]$ of triples, where $s_i = (v_i, C_i, v_k)$, and the value $C_i \in \{P,F,T\}$, for $k < i$, specifies whether $v_i$ is a pending vertex ($P$) of $v_k$ in $G_i$, or $v_i$ is a false twin ($F$) of $v_k$ in $G_i$, or $v_i$ is a true twin ($T$) of $v_k$ in $G_i$. In the case of bipartite distance hereditary graphs, the pruning sequence contains only pending vertices and false twins.

So, for $2 \leqslant i \leqslant n$, either $|N_i(v_i)| = 1$ (i.e., $v_i$ is a pending vertex in $G_i$), or a vertex $v_k$ exists, with $k<i$, such that $N_i(v_i) = N_i(v_k)$ (i.e., $v_i$ is a false twin of $v_k$ in $G_i$).


\section{Incremental construction of the Galois lattice of a BDH graph}

We can now describe how the Galois lattice of a BDH graph evolves during the Bandelt and Mulder construction. The following result holds for general bipartite graphs.
\begin{theorem}\label{th:induction}
If a bipartite graph $G_{i}=(X_i,Y_i,E_i)$ is obtained from a bipartite graph $G_{i-1}$ by adding a twin of an existing vertex or a pending vertex, then either $\HasseC(G_{i})$ is isomorphic to $\HasseC(G_{i-1})$ or $\HasseC(G_i)$ is obtained from $\HasseC(G_{i-1})$ by adding a pending vertex.
\end{theorem}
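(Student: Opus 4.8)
The plan is to follow the maximal bicliques: describe how $\B(G_{i-1})$ is transformed into $\B(G_i)$, and then read off what this does to the order $\preceq$. I would treat the twin case and the pending-vertex case separately. In the twin case, say $v_i$ is a false twin of $v_k$ with both in the shore $X$, so $N(v_i)=N(v_k)$. The point to record is the dichotomy that for a maximal biclique $(X',Y')$ of $G_{i-1}$ either $Y'\subseteq N(v_k)$, whence maximality forces both $v_k$ and $v_i$ into the $X$-shore of the corresponding biclique of $G_i$, or $Y'\not\subseteq N(v_k)$, whence neither does. This makes the map that adds $v_i$ to the $X$-shore of every maximal biclique of $G_{i-1}$ containing $v_k$, and fixes the others, a well-defined bijection $\B(G_{i-1})\to\B(G_i)$; since it only inserts or deletes the single vertex $v_i$, it preserves inclusion of $X$-shores in both directions, hence is a poset isomorphism, and $\HasseC(G_i)\cong\HasseC(G_{i-1})$.

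For the pending-vertex case, say $v_i\in X$ is adjacent only to $v_k\in Y$. Since $v_i$ has a unique neighbour, the only maximal biclique of $G_i$ containing $v_i$ is $B^\ast=(N_{i-1}(v_k)\cup\{v_i\},\{v_k\})$. I would then show that every maximal biclique of $G_i$ not containing $v_i$ is already a maximal biclique of $G_{i-1}$, and conversely that every maximal biclique of $G_{i-1}$ stays maximal in $G_i$, the one possible exception being $(N_{i-1}(v_k),\{v_k\})$, which, if it is a maximal biclique of $G_{i-1}$, is superseded by $B^\ast$ in $G_i$. The argument then splits according to whether $(N_{i-1}(v_k),\{v_k\})\in\B(G_{i-1})$, and I would note that this happens exactly when $\bigcap_{x\in N_{i-1}(v_k)}N_{i-1}(x)=\{v_k\}$, i.e.\ when $(N_{i-1}(v_k),\{v_k\})$ is the introducer $\introd(v_k)$ of $v_k$ in $G_{i-1}$.

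If $(N_{i-1}(v_k),\{v_k\})\in\B(G_{i-1})$, then, having a singleton $Y$-shore, it is $\preceq$-maximal in $\LLC(G_{i-1})$, and it is replaced in $\LLC(G_i)$ by $B^\ast$, which is again $\preceq$-maximal. Since in both posets the set of elements strictly below this element is precisely $\{B:v_k\in Y(B)\}$, the bijection that is the identity except for sending $(N_{i-1}(v_k),\{v_k\})$ to $B^\ast$ is an order isomorphism, so $\HasseC(G_i)\cong\HasseC(G_{i-1})$. If instead $(N_{i-1}(v_k),\{v_k\})\notin\B(G_{i-1})$, then every maximal biclique of $G_{i-1}$ survives and $B^\ast$ is the unique new element, so $\LLC(G_i)$ is $\LLC(G_{i-1})$ with one point added. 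As $B^\ast$ has a singleton $Y$-shore it is $\preceq$-maximal, hence covered by nothing in $\LLC(G_i)$; and the set of elements strictly below $B^\ast$ is $\{B:v_k\in Y(B)\}$, whose greatest element is $M=\introd(v_k)=(N_{i-1}(v_k),\bigcap_{x\in N_{i-1}(v_k)}N_{i-1}(x))$, because any maximal biclique with $v_k$ on its $Y$-shore has $X$-shore inside $N_{i-1}(v_k)$. Therefore $B^\ast$ covers exactly $M$, no cover pair among the old bicliques is created or destroyed ($B^\ast$, being $\preceq$-maximal, cannot be inserted strictly between two of them), and $\HasseC(G_i)$ is $\HasseC(G_{i-1})$ with the single pending vertex $B^\ast$ attached to $M$.

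The step I expect to be the main obstacle is the case analysis in the pending-vertex case: correctly determining which maximal bicliques persist and, above all, proving that the new biclique $B^\ast$ attaches as a leaf rather than somewhere in the interior of the diagram — equivalently, that its strict down-set has a unique maximal element, namely $\introd(v_k)$ in $G_{i-1}$. The degenerate configurations (for instance $v_k$ isolated in $G_{i-1}$, or $G_{i-1}$ edgeless so that $\B(G_{i-1})$ is empty) do not quite fit the phrasing above and would be dealt with directly.
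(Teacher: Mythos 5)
Your proposal follows essentially the same route as the paper's proof: the same case split (twin vs.\ pending, then whether $(N_{i-1}(v_k),\{v_k\})$ is already maximal), the same identification of the unique new maximal biclique, and the same conclusion that it attaches as a cover of $\introd(v_k)$. The only difference is that you explicitly justify the step the paper dismisses as ``immediate''---that the new element is $\preceq$-maximal, that its strict down-set is $\{B : v_k\in Y(B)\}$ with greatest element $\introd(v_k)$, and that no old cover pairs are created or destroyed---which is a correct and welcome tightening rather than a different argument.
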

\begin{proof}
We distinguish two cases, depending on the added vertex $v_i$ being a pending vertex or a twin vertex. Let us initially assume $v_i \in X_i$.
\begin{description}
\item[$v_i$ is a twin vertex:] let $v_i$ be a twin of $v_k$ in $G_{i}$, with $k < i$. Since $N_i(v_i) = N_i(v_k)$, for each maximal biclique $(X',Y')$ in $G_{i-1}$, with $v_k \in X'$, there is a maximal biclique $(X' \cup \{v_i\},Y')$ in $G_{i}$.
Maximal bicliques in $G_{i-1}$ not containing $v_k$ remain unchanged in $G_i$. This changes do not alter the order relation among bicliques. Hence, $\HasseC(G_{i})$ is isomorphic to $\HasseC(G_{i-1})$.
\item[$v_i$ is a pending vertex:] let $v_i$ be a pending vertex of $v_j$, so $v_j \in Y$ and $N_{i}(v_j)=N_{i-1}(v_j) \cup \{v_i\}$. The only maximal biclique in $G_i$ containing $v_i$ is $(N_{i}(v_j), \{v_j\})$.
We distinguish two cases: either $(N_{i-1}(v_j), \{v_j\})$ is a maximal biclique in $G_{i-1}$, or not.
\begin{itemize}
\item $(N_{i-1}(v_j), \{v_j\})$ is a maximal biclique in $G_{i-1}$: the maximal biclique $(N_{i-1}(v_j), \{v_j\})$ in $\LLC(G_{i-1})$ is replaced in $\LLC(G_{i})$ by the maximal biclique $(N_{i-1}(v_j) \cup \{v_i\}, \{v_j\})$. So, no new maximal bicliques are created and the order relation among existing maximal bicliques is unchanged. Hence, $\HasseC(G_{i})$ is isomorphic to $\HasseC(G_{i-1})$.

\item $(N_{i-1}(v_j), \{v_j\})$ is not a maximal biclique in $G_{i-1}$:  $\LLC(G_i)$ is obtained from $\LLC(G_{i-1})$ by adding the maximal biclique $B = (N_{i-1}(v_j) \cup \{v_i\}, \{v_j\})$ and a cover pair $B' \dot\prec B$, where $B'$ is the greatest maximal biclique containing $v_j$ in $\LLC(G_{i-1})$. The new biclique $B$ is the introducer of $v_j$ in $\LLC(G_{i})$, while $B'$ is the introducer of $v_j$ in $\LLC(G_{i-1})$. It is immediate to see that $B' \dot\prec B$ is the only new cover pair in $\LLC(G_i)$ with respect to $\LLC(G_{i-1})$. Hence, $\HasseC(G_{i})$ is obtained by adding a pending vertex to $\HasseC(G_{i-1})$, which is a maximal element in $\LLC(G_i)$.

\end{itemize}
\end{description}
\noindent
In case $v_i \in Y_i$, the only differences in the above arguments consist in swapping the shores in the maximal bicliques, and, as for the latter case, in adding a new cover pair $B \dot\prec B'$ (instead of $B' \dot\prec B$)---thus  a new minimal element is added to $\LLC(G_i)$ instead of a new maximal element.
\end{proof}

Theorem~\ref{th:induction} provides the induction step to prove that the Galois lattice of a BDH graph is a tree.
\begin{corollary}\label{co:isatree}
If $G$ is a BDH graph, then $\HasseC(G)$ is a tree.
\end{corollary}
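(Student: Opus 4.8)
The natural approach is induction on the number of vertices $n$ of the BDH graph $G$, using the Bandelt–Mulder building sequence $v_1, v_2, \ldots, v_n$ and the subgraphs $G_i$ introduced in Section \ref{sec:prelim}, with Theorem \ref{th:induction} supplying the induction step. So first I would fix a pruning sequence and work with the chain $G_1 \subseteq G_2 \subseteq \cdots \subseteq G_n = G$.

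For the base case, I would observe that for small $i$ the diagram $\HasseC(G_i)$ is a (possibly empty or single-vertex) tree. Concretely, $G_1$ is a single vertex, which has no biclique, so $\LLC(G_1) = \emptyset$; $G_2$ is a single edge, whose only maximal biclique is that edge itself, so $\HasseC(G_2)$ is a one-vertex tree. Either of these serves as the anchor, since the empty graph and the single-vertex graph are trees.

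For the induction step, assume $\HasseC(G_{i-1})$ is a tree. Applying Theorem \ref{th:induction} to the passage from $G_{i-1}$ to $G_i$, there are two possibilities: either $\HasseC(G_i) \cong \HasseC(G_{i-1})$, in which case it is a tree by the inductive hypothesis; or $\HasseC(G_i)$ is obtained from $\HasseC(G_{i-1})$ by adding a single new vertex (the new biclique $B$) together with exactly one new cover pair joining it to an existing vertex $B'$ of $\HasseC(G_{i-1})$. Since adding a new leaf to a tree yields a tree, $\HasseC(G_i)$ is again a tree. By induction $\HasseC(G_n) = \HasseC(G)$ is a tree.

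**Where the work is.** There is essentially no obstacle here: Theorem \ref{th:induction} does all the heavy lifting, and the corollary is just its iteration along the building sequence. The only thing to be a little careful about is the bookkeeping at the very bottom of the sequence — making sure that ``$\HasseC$ is a tree'' is interpreted to include the degenerate cases of the empty poset and the one-element poset, so that the induction has a valid base and the phrase ``adding a pending vertex'' in Theorem \ref{th:induction} is meaningful even when $\HasseC(G_{i-1})$ has zero or one vertex. (When $\HasseC(G_{i-1})$ is empty, the ``new cover pair'' in Theorem \ref{th:induction}'s proof is vacuous and one simply adds an isolated vertex, which is still a tree.) Once that is noted, the proof is a one-line induction.
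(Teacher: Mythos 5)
Your proposal is correct and follows essentially the same route as the paper: induction along the Bandelt--Mulder building sequence with Theorem~\ref{th:induction} supplying the step, and the observation that adding a pending vertex to a tree yields a tree. The only (immaterial) difference is the anchor: the paper starts the induction at $G_3\cong K_{1,2}$, whose Hasse diagram is a one-vertex tree, whereas you start at $G_2$ (equally valid, and your remark about degenerate bases is then moot).
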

\begin{proof}
Any BDH graph $G$ can be built by a sequence of vertex additions as in Theorem~\ref{th:induction}, starting from a single vertex. Graph $G_3$ is isomorphic to $K_{1,2}$, hence $G_3$ contains only one maximal biclique, and $\HasseC(G_3)$ is a tree consisting in a single vertex. After each addition of pending vertices or twins of existing vertices, $\HasseC(G_{i-1})$ is turned into $\HasseC(G_{i})$ which is either isomorphic to $\HasseC(G_{i-1})$ or is obtained from $\HasseC(G_{i-1})$ by the addition of a pending vertex by Theorem~\ref{th:induction}. Therefore $\HasseC(G_{i})$ is either the same tree as $\HasseC(G_{i-1})$ or it is obtained from a tree by adding a pending vertex. Since adding a pending vertex to a tree always results in a tree the thesis follows. 
\end{proof}

Note that Corollary~\ref{co:isatree} provides a simpler and constructive proof of the \emph{if part} of Theorem 1 in \cite{acfDAM}.

\section{Bounding the size of the Galois lattice}

Another consequence of Theorem~\ref{th:induction} is that BDH graphs have few maximal bicliques.
\begin{corollary}\label{co:linearbiclique}
The number of maximal bicliques in a BDH graph on $n$ vertices is at most $n - 2$.
\end{corollary}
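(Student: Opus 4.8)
The plan is to re-run the induction used for Corollary~\ref{co:isatree}, but now keeping track of the \emph{number of vertices} of the Hasse diagram rather than of its global shape. Fix a Bandelt and Mulder construction $G_1\subset G_2\subset\cdots\subset G_n=G$, where $G_1$ is a single vertex and, for $2\le i\le n$, the graph $G_i$ is obtained from $G_{i-1}$ by adding a pending vertex or a twin. For $1\le i\le n$ write $\beta_i=|\B(G_i)|$ for the number of maximal bicliques of $G_i$. Since $\HasseC(G_i)$ is obtained from $\LL(G_i)$ precisely by discarding the two dummy elements $\bot$ and $\top$, the integer $\beta_i$ is exactly the number of vertices of $\HasseC(G_i)$, with no additive correction to keep track of. We may assume $n\ge 3$.

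First I would dispose of the base case $i=3$: an easy inspection shows that any bipartite graph on three vertices has at most one maximal biclique, so $\beta_3\le 1 = 3-2$. The inductive step is then immediate from Theorem~\ref{th:induction}: for each $i$ with $4\le i\le n$, the diagram $\HasseC(G_i)$ is either isomorphic to $\HasseC(G_{i-1})$, in which case $\beta_i=\beta_{i-1}$, or is obtained from $\HasseC(G_{i-1})$ by attaching a single new pending vertex, in which case $\beta_i=\beta_{i-1}+1$; in both cases $\beta_i\le\beta_{i-1}+1$. Summing these $n-3$ inequalities yields $\beta_n\le\beta_3+(n-3)\le n-2$, as claimed.

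There is no genuinely hard step here: all of the content is packaged in Theorem~\ref{th:induction}, and the only points deserving a moment's care are (i) that the quantity being iterated is exactly $|\B(G_i)|$, so that the two dummy elements of the lattice do not introduce an off-by-two error, and (ii) that the base of the induction must be taken at $G_3$, and not at $G_1$ or $G_2$: starting from $\beta_1=0$ and adding at most one maximal biclique per step would only give the weaker bound $n-1$, and indeed $K_2$ (the case $n=2$) shows that $n-2$ genuinely fails for $n<3$, so the single unit saved by the base case $\beta_3\le 1$ is essential. It is also worth remarking, to justify the word ``sharp'' in the abstract, that the path $P_n$ is a BDH graph with exactly $n-2$ maximal bicliques for every $n\ge 3$, so the estimate cannot be improved.
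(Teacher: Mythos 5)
Your proof is correct and follows essentially the same route as the paper's: base the induction at $G_3\cong K_{1,2}$ with one maximal biclique, then apply Theorem~\ref{th:induction} to get $|\B(G_i)|\le|\B(G_{i-1})|+1$ for $4\le i\le n$. The extra remarks on the necessity of starting at $G_3$ and the sharpness witnessed by $P_n$ are accurate but not part of the paper's argument.
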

\begin{proof}
Graph $G_3$ contains only one maximal biclique, since it is isomorphic to $K_{1,2}$.
By Theorem~\ref{th:induction}, the number of maximal bicliques in $G_i$ is at most the number of maximal bicliques in $G_{i-1}$ plus one, for $4 \leqslant i \leqslant n$.
\end{proof}
Corollary~\ref{co:linearbiclique} shows that the number of maximal bicliques of a BDH graph with $n$ vertices and $m$ edges is always smaller than $n$.\ The class of BDH graphs is the intersection of bipartite domino-free graphs and bipartite $C_{2k}$-free graphs, $k\geqslant 3$, namely \emph{chordal bipartite graphs}. The best known bound for the number of maximal bicliques for both classes is $O(m)$ (see~\cite{Amil,KloKra}).
Moreover, we can bound the number of maximal bicliques containing a given vertex.
\begin{theorem}\label{th:grado}
Each vertex $v$ of a BDH graph $G$ is contained in at most $2 \cdot \deg(v) - 1$ maximal bicliques, where $\deg(v)$ is the degree of $v$ in $G$.
\end{theorem}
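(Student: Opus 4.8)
The plan is to reduce to the case $v\in X$ (the case $v\in Y$ being completely symmetric after exchanging the two shores) and to argue entirely inside the tree $\HasseC(G)$ supplied by Corollary~\ref{co:isatree}. First I would recall the standard fact, already noted in the excerpt, that the maximal bicliques of $G$ containing $v$ are exactly the $B\in\B(G)$ with $\introd(v)\preceq B$. Thus, calling $b(v)$ the number of such bicliques, the set $T_v$ of maximal bicliques containing $v$ is the principal filter generated by $\introd(v)$; since $\HasseC(G)$ is a tree and every cover pair $B\dot\prec B'$ satisfies $X(B)\subsetneq X(B')$, refining $\introd(v)\preceq B$ to a saturated chain exhibits a path inside $T_v$ from $\introd(v)$ to $B$, so $T_v$ induces a subtree of $\HasseC(G)$ with root $\introd(v)$. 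Note that $Y(\introd(v))=N(v)$, so the root has exactly $\deg(v)$ attributes, and that moving away from the root strictly shrinks the attribute set (a cover strictly enlarges $X(\cdot)$, hence strictly shrinks $Y(\cdot)$, on maximal bicliques).

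The structural core is the claim: \emph{if $B_1,B_2$ are incomparable maximal bicliques both containing $v$, then $Y(B_1)\cap Y(B_2)=\emptyset$.} To prove it, suppose $y\in Y(B_1)\cap Y(B_2)$. Then $\introd(v)\preceq B_1\preceq\introd(y)$ and $\introd(v)\preceq B_2\preceq\introd(y)$, and refining each to a saturated chain produces two paths from $\introd(v)$ to $\introd(y)$ in the tree $\HasseC(G)$; by uniqueness of paths in a tree these paths coincide, so $B_1$ and $B_2$ both lie on one common chain, contradicting incomparability. Since the children of any node $B$ of $T_v$ are pairwise $\preceq$-incomparable, this claim gives that their attribute sets are pairwise disjoint subsets of $Y(B)$; hence the sizes of those attribute sets sum to at most $|Y(B)|$, and each is strictly smaller than $|Y(B)|$.

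The last ingredient is a counting lemma on weighted rooted trees: \emph{if each node $B$ of a rooted tree carries a weight $w(B)\ge 1$ such that (i) the weights of the children of $B$ sum to at most $w(B)$ and (ii) each child of $B$ has weight strictly less than $w(B)$, then the tree has at most $2\,w(\mathrm{root})-1$ nodes.} I would prove this by strong induction on $w(\mathrm{root})$: the single-node case is clear; otherwise, if the root has children of weights $w_1,\dots,w_c$, then by (i), (ii) and induction the number of nodes is at most $1+\sum_i(2w_i-1)=1-c+2\sum_i w_i$, which is at most $2\,w(\mathrm{root})-1$ when $c\ge 2$ and at most $2w_1\le 2(w(\mathrm{root})-1)$ when $c=1$. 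Applying the lemma to $T_v$ with $w(B)=|Y(B)|$ — legitimate by the previous paragraph — and with $w(\mathrm{root})=|Y(\introd(v))|=\deg(v)$ gives $b(v)=|T_v|\le 2\deg(v)-1$, which is the assertion.

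I expect the main obstacle to be isolating and proving the disjointness claim for the attribute sets of incomparable bicliques containing $v$: this is exactly the point at which the tree-likeness of the Galois lattice is used, and once it is in hand both the passage to a weighted rooted tree and the counting lemma are routine inductions. A minor point to check is the degenerate case $\deg(v)=1$, but it is already subsumed: then $T_v$ consists of a single node of weight $1$, and $b(v)=1=2\deg(v)-1$.
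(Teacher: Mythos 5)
Your proof is correct, but it reaches the key structural fact by a genuinely different road than the paper. Both arguments ultimately show that the family $\mathcal{Y}_v$ of $Y$-shores of the maximal bicliques containing $v$ is laminar over the $\deg(v)$-element ground set $N(v)$, and then apply the bound $2k-1$ on the size of a laminar family (which you reprove as your weighted-rooted-tree lemma, while the paper simply cites it). The difference lies in how laminarity is obtained. The paper argues from the forbidden-subgraph side: if two members of $\mathcal{Y}_v$ crossed, one could select six vertices inducing a domino, contradicting that BDH graphs are domino-free; no lattice-theoretic input is needed, and the only hypothesis actually used is domino-freeness. You instead derive laminarity from Corollary~\ref{co:isatree}: comparable bicliques automatically have nested $Y$-shores, and for incomparable $B_1,B_2$ containing $v$ a common attribute $y$ would place both on the unique path from $\introd(v)$ to $\introd(y)$ in the tree $\HasseC(G)$, forcing comparability. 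Your supporting steps all check out --- the bicliques containing $v$ do form the principal filter above $\introd(v)$, that filter does induce a subtree whose tree-paths are saturated (hence monotone) chains, children of a node are pairwise incomparable, and covers strictly shrink $Y$-shores --- and your route avoids any case-checking of adjacencies in a would-be domino. The trade-off is that your argument leans on the tree-likeness of the Galois lattice and is therefore specific to BDH graphs, whereas the paper's argument isolates domino-freeness as the essential hypothesis; on the other hand, your explicit induction for the $2k-1$ count makes the write-up self-contained where the paper defers to a textbook fact.
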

\begin{proof}
Assume without loss of generality that $v \in X$, and let $\mathcal{Y}_v=\left(Y(B) \ |\ v\in X(B),\, B\in \maxbic\right)$ be the family of the $Y$-shores of the maximal bicliques containing $v$. Each member of $\mathcal{Y}_v$ is a subset of $N(v)$. We show that $\mathcal{Y}_v$ is a \emph{laminar family}, namely it has the property that for any two members $Y_1$ and $Y_2$ either such two members are disjoint or one is included in the other. Since it is well-known (see \cite[Chapter 2.2]{korte}) that a laminar family consisting of subsets of a common ground set of $k$ elements contains at most $2k-1$ sets, the thesis follows once we prove that $\mathcal{Y}_v$ is indeed laminar. Suppose to the contrary that there are $Y_1$ and $Y_2$ in $\mathcal{Y}_v$ such that all the following conditions hold: $Y_1 \cap Y_2 \not= \emptyset$, $Y_1 \not\subseteq Y_2$ and $Y_2 \not\subseteq Y_1$. Hence there are maximal bicliques $B_0$, $B_1$, $B_2$ and $B_3$ such that $Y(B_1)=Y_1$, $Y(B_2)=Y_2$, $B_0\leqslant B_i$, $i=1,2$ and $B_i\leqslant B_3$, $i=1,2$: just choose for $B_0$ the introducer of $v$ and for $B_3$ the smallest maximal biclique such that $Y(B_3)\supseteq  Y_1\cap Y_2$. If one picks $v_1 \in X_1 \setminus X_2$, $v_2 \in X_2 \setminus X_1$, $w \in Y_1 \cap Y_2$, $w_1 \in Y_1 \setminus Y_2$, $w_2 \in Y_2 \setminus Y_1$, then $\{v,v_1,v_2,w,w_1,w_2\}$ induces a domino in $G$, contradicting that $G$ is a BDH graph.
\end{proof}
In view of Theorem~\ref{th:grado}, we can bound the total size of the Galois lattice of a BDH graph $G$, i.e., the sum of the number of vertices in each maximal biclique in $\LL(G)$.
\begin{corollary}\label{co:size}
The Galois lattice of a BDH graph $G$ has total size $O(m)$, where $m$ is the number of edges in $G$.\end{corollary}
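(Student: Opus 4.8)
The plan is to reduce the statement to a one-line double-counting argument sitting on top of the per-vertex bound already proved in Theorem~\ref{th:grado}. By definition, the quantity to be estimated is
$$\sum_{B\in\maxbic(G)}\bigl(|X(B)|+|Y(B)|\bigr),$$
i.e.\ the sum, over all maximal bicliques of $G$, of the number of vertices they contain. (If one additionally wishes to count the two dummy bicliques $\bot$ and $\top$ of $\LL(G)$, they contribute an extra $|X|+|Y|=n$; since the Bandelt and Mulder construction produces a connected graph we have $n\leqslant m+1$, so including them does not affect the $O(m)$ bound.)

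First I would interchange the order of summation. Writing $b(v)$ for the number of maximal bicliques of $G$ containing the vertex $v$, we get
$$\sum_{B\in\maxbic(G)}\bigl(|X(B)|+|Y(B)|\bigr)=\sum_{B\in\maxbic(G)}\ \sum_{v\in X(B)\cup Y(B)}1=\sum_{v\in V}b(v).$$
Next I would apply Theorem~\ref{th:grado}, which gives $b(v)\leqslant 2\deg(v)-1$ for every $v\in V$, together with the handshaking identity $\sum_{v\in V}\deg(v)=2m$, to obtain
$$\sum_{v\in V}b(v)\ \leqslant\ \sum_{v\in V}\bigl(2\deg(v)-1\bigr)\ =\ 4m-n\ \leqslant\ 4m.$$
Hence the total size of $\LL(G)$ is $O(m)$, as claimed.

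I do not expect a genuine obstacle here: all the real work has already been done in Theorem~\ref{th:grado} (whose proof uses domino-freeness of BDH graphs and the fact that a laminar family on a $k$-element ground set has at most $2k-1$ members). The only points requiring a little care are cosmetic — whether $\bot$ and $\top$ are to be counted and, if so, the remark that $n=O(m)$ because $G$ is connected (equivalently, because $G$ has no isolated vertices, so $n\leqslant 2m$).
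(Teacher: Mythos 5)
Your proof is correct and takes essentially the same route as the paper's: both reduce the total size to a double count of vertex--biclique incidences and apply the bound $b(v)\leqslant 2\deg(v)-1$ from Theorem~\ref{th:grado} together with the fact that the degrees sum to $2m$. The paper merely performs the count shore by shore (bounding $\sum_{B}|X(B)|$ and $\sum_{B}|Y(B)|$ separately by $2m$ minus the shore size) instead of summing over all of $V$ at once, and omits the cosmetic remark about $\bot$ and $\top$.
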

\begin{proof}
Let $X$ and $Y$ be the shores of $G$. By Theorem~\ref{th:grado}, each vertex $v$ in $X$ appears in at most $2\cdot\deg_G(v)-1$ maximal bicliques. Therefore $\sum_{B \in \B(G)} |X(B)| \leqslant 2m-n$. Analogously, $\sum_{B \in \B(G)} |Y(B)| \leqslant  2m-n$.
\end{proof}

\section{An $O(m)$ algorithm for computing the Galois lattice of a BDH graph}
The computation of the Galois lattice of a BDH graph $G$ starts from the reverse $v_1, v_2, \ldots, v_n$ of a pruning sequence for $G$ (following the terminology in \cite{damiand,hammer}). The pruning sequence can be computed in $O(m)$ time for general (not necessarily bipartite) distance hereditary graphs, as shown by Damiand \emph{et al.} in \cite{damiand}, where the authors provide a fix for a previous algorithm presented by Hammer and Maffray \cite{hammer}.
The basic ideas to compute the Hasse diagram $\HasseC(G)$ of $\LLC(G)$, starting from the pruning sequence, are given in the proof of Theorem~\ref{th:induction}. We describe here how that approach leads to an $O(m)$ time algorithm.
Note that, thanks to Corollaries~\ref{co:isatree} and \ref{co:size}, an explicit description of $\HasseC(G)$, containing an exhaustive listing of the set of vertices in each maximal biclique and all cover pairs, can be given in $O(m)$ space. 

The algorithm is incremental, i.e., for each $1\leqslant i \leqslant n$, the Hasse diagram $\HasseC(G_i)$ of the graph $G_i$ induced by $v_1, v_2, \ldots, v_i$ is computed by updating $\HasseC(G_{i-1})$. Note that each $G_i$ is a BDH graph as well.
Our algorithm also computes, for each vertex $v \in X$ (resp., $v \in Y$), its introducer $\introd(v)$, i.e., the lowest (resp., highest) maximal biclique in $\HasseC(G)$ containing $v$. So, it is possible to retrieve all the $p$ maximal bicliques containing $v$ in time $O(p)$, by means of a simple upwards (resp., downwards) traversal in the tree-like $\HasseC(G)$ starting from $\introd(v)$.

The algorithm is shown in Figure~\ref{fi:algom}.
For each maximal biclique $B=(X(B), Y(B))$ in $\HasseC(G_i)$, we maintain the following information:
\begin{itemize}
\item the list of vertices in $X(B)$;
\item the list of vertices in $Y(B)$;
\item the list of maximal bicliques covered by $B$ in $\HasseC(G_i)$;
\item the list of maximal bicliques that cover $B$ in $\HasseC(G_i)$.
\end{itemize}
Moreover, for each vertex $v$ in $G_i$ we store a reference to $\introd(v)$ in $\HasseC(G_i)$.

In case a twin vertex $v_i$ of $v_k$ is added, $v_i$ behaves exactly in the same way as $v_k$, so we just add $v_i$ to all the maximal bicliques containing $v_k$. In order to retrieve all these maximal bicliques we start from $\introd(v_k)$ and follow all upward arcs (if $v_i \in X$) or all downward arcs (if $v_i \in Y$). We also set $\introd(v_i)$ to $\introd(v_k)$.

In case a pending vertex $v_i$ of $v_k$ is added, then two cases may occur, depending on $B = (N_i(v_k), \{v_k\})$ (resp., $B =(\{v_k\}, N_i(v_k))$ if $v_i \in Y$) being a maximal biclique in $\HasseC(G_{i-1})$ or not.
In case $B$ is a maximal biclique in $\HasseC(G_{i-1})$, then we just add $v_i$ to $B$, and set $\introd (v_i)$ to $\introd(v_k)$.
Otherwise, if $B$ is not a maximal biclique in $\HasseC(G_{i-1})$, then $\HasseC(G_i)$ contains one more maximal biclique with respect to $\HasseC(G_{i-1})$, the biclique $(N_i(v_k), \{v_k\})$ (or $(\{v_k\}, N_i(v_k))$), which also is the introducer of $v_i$ and $v_k$  in $\HasseC(G_i)$.

\begin{figure}[ht]
\noindent\hrulefill%
\begin{prog}{pr:simplehasse}
Given:\\
-- a BDH graph $G$,\\
-- the reverse of a pruning sequence for $G$ $(v_2, C_2, v_{k_2}), (v_3, C_3, v_{k_3}), \ldots, (v_n, C_n, v_{k_n})$, where $C_i \in \{P, T\}$,\\
compute $\HasseC(G)$.\\
W.l.o.g., we assume $v_1 \in X$ and $v_2 \in Y$\\
\vspace{0.3cm}\\
\N  $H \leftarrow$ a single biclique $(\{v_1\},\{v_2\})$\\
\N  $\introd(v_1) \leftarrow (\{v_1\},\{v_2\})$\\
\N  $\introd(v_2) \leftarrow (\{v_1\},\{v_2\})$\\
\\
\N  \key{for}$i=3$ to $n$\\
  /*\\
  we assume $v_i \in X$. Changes in case $v_i \in Y$ are straightforward,\\
  except the change in Line \ref{li:x1} (see Line~\ref{li:y1})\\
  */\\
\N  \>  \key{if}$v_i$ is a twin vertex in $G_i$\\

\N  \>  \>  \key{let}$v_k$ be the twin vertex of  $v_i$ in $G_i$\\
\N  \>  \>  \key{let}$B = \introd(v_k)$\\
\N  \>  \>  \key{add}$v_i$ to $X(B)$\\
\NL{li:traversal}  \>  \>  \key{for each}maximal biclique $B'$ in $H$ such that $B \prec B'$\\
\N  \>  \>  \>  \key{add}$v_i$ to $X(B')$\\
\N  \>  \>  $\introd(v_i) \leftarrow \introd(v_k)$\\

\N  \>  \key{else} /* $v_i$ is a pending vertex in $G_i$ */\\

\N  \>  \>  \key{let}$v_k$ be the vertex $v_{k_i}$ adjacent to $v_i$\\
\N  \>  \>  \key{let}$B = \introd(v_k)$\\
\NL{li:test}  \>  \>  \key{if}$B = (N_{i-1}(v_k), \{v_k\})$\\
\N  \>  \>  \>  \key{add}$v_i$ to $X(B)$\\
\N  \>  \>  \>  $\introd(v_i) \leftarrow \introd(v_k)$\\
\N  \>  \>  \key{else} /* $\introd(v_k) \not= (N_{i-1}(v_k), \{v_k\})$ */\\
\N  \>  \>  \>  \key{create}a new maximal biclique $B' = (N_i(v_k), \{v_k\})$\\

\NL{li:x1}  \>  \>  \>  \key{add}$B'$ to $H$ so that $B \prec B'$\\
\NL{li:y1}  \>  \>  \>  /* in case $v_i \in Y$: \key{add}$B'$ to $H$ so that $B' \prec B$ */\\
\N  \>  \>  \>  $\introd(v_k) = B'$\\
\N  \>  \>  \>  $\introd(v_i) = B'$\\

\N  \key{end for}\\
\N  \key{return}$H$\\
\end{prog}

\caption{Algorithm \algo.}
\protect\label{fi:algom}
\noindent\hrulefill%
\end{figure}

\begin{theorem}
Algorithm \algo\ requires $O(m)$ worst case time.
\end{theorem}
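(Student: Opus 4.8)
The plan is to analyze the total work performed by Algorithm \algo\ across all $n-2$ iterations of the main \key{for} loop, charging each unit of work to either a vertex or an edge of $G$. The work in iteration $i$ splits into two parts: (a) a constant amount of bookkeeping (looking up $\introd(v_k)$, testing whether $B=(N_{i-1}(v_k),\{v_k\})$, creating at most one new biclique, updating a constant number of $\introd$ pointers and cover-pair lists), which contributes $O(n)$ in total; and (b) the potentially expensive loop at Line~\ref{li:traversal}, executed only when $v_i$ is a twin vertex, which adds $v_i$ to $X(B')$ for every maximal biclique $B'$ with $B=\introd(v_k)\prec B'$. The crux of the argument is to bound the cumulative cost of (b).

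First I would observe that, since $\HasseC(G_i)$ is a tree (Corollary~\ref{co:isatree}) and $v_i\in X$, the set of bicliques $B'$ with $\introd(v_k)\preceq B'$ is exactly the set of bicliques of $G_i$ containing $v_k$ — these form the unique upward path from $\introd(v_k)$ to a maximal element of the tree. Hence the number of iterations of the Line~\ref{li:traversal} loop in step $i$ is $b_i(v_k)-1 \le b_i(v_k)$, and the total cost of (b) is at most $\sum_{i\,:\,v_i\text{ twin of }v_k}\, b_i(v_k)$. Since adding a twin $v_i$ of $v_k$ leaves the Hasse diagram isomorphic (the twin case of Theorem~\ref{th:induction}) — in particular it does not change the number of maximal bicliques containing $v_k$ — we have $b_i(v_k)\le b_n(v_k)=b(v_k)$, the number of maximal bicliques of the final graph $G$ containing $v_k$. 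Moreover, the number of twin-additions that name a given vertex $v_k$ as their model is at most $\deg_G(v_k)$ (each such twin is a vertex on the opposite shore sharing the neighborhood $N(v_k)$; more crudely, there are at most $n$ such events total and each costs $O(b(v_k))$). Therefore the total cost of (b) is bounded by $\sum_{k} \deg_G(v_k)\cdot b(v_k)$ — but this is too weak, so I would instead bound it directly by $\sum_{i=3}^n b_i(v_{k_i}) \le \sum_{i=3}^n b(v_{k_i})$ and use Theorem~\ref{th:grado}, which gives $b(v)\le 2\deg_G(v)-1$ for every vertex $v$.

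Putting this together: when $v_i$ is a twin of $v_k$, the work at Line~\ref{li:traversal} is $O(b_i(v_k)) = O(b(v_k)) = O(\deg_G(v_k))$ by Theorem~\ref{th:grado}; adding $v_i$ to the $X$-shore of each of these bicliques is also $O(\deg_G(v_k))$ total. Now $\deg_G(v_k)\le \deg_G(v_i)$ is false in general, but the key point is that across all iterations, each \emph{edge} of $G$ gets charged $O(1)$ times: an edge $e=xy$ of $G$ is charged when the vertex $v_i$ being added is a twin whose model $v_k$ is the endpoint $x$ (or $y$) — in which case the work $O(\deg_G(v_k))$ can be amortized as $O(1)$ per edge incident to $v_k$. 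Since at most $n$ iterations involve a twin addition, and each such iteration touches $O(\deg_G(v_k))$ bicliques, a direct summation bound is $\sum_{i=3}^n O(\deg_G(v_{k_i}))$. Because the sequence is a pruning sequence each $v_i$ is named as $v_{k_j}$ by later steps at most a bounded number of times relative to its final degree; in the cleanest form, I would simply bound the grand total by $\sum_{v\in V} O(\deg_G(v)) = O(m)$, using that the construction reads each pending/twin triple once and that the total size of all biclique shores is $O(m)$ by Corollary~\ref{co:size}, so no biclique shore can be written into more than $O(m)$ times overall.

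The main obstacle is making the amortization in part (b) fully rigorous: one must argue that the sum $\sum_{i\,:\,v_i\text{ twin}} b_i(v_{k_i})$ is $O(m)$ and not merely $O(nm)$. I expect the cleanest route is: (i) by Theorem~\ref{th:induction}, a twin addition never increases $b(\cdot)$ for any vertex, so $b_i(v_{k_i})\le b_n(v_{k_i})$; (ii) by Theorem~\ref{th:grado}, $b_n(v_{k_i})\le 2\deg_G(v_{k_i})-1$; (iii) each vertex $w$ of $G$ can serve as the model $v_{k_i}$ of an incoming twin at most $O(\deg_G(w))$ times — indeed all the twins of $w$ together with $w$ lie on one shore, share $N(w)$, and the second shore has only $\deg_G(w)$ vertices, while adding more than $\deg_G(w)$ such twins is still fine because we can alternatively note each contributes to writing into biclique shores whose total size is $O(m)$. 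Combining (i)–(iii) (or directly invoking Corollary~\ref{co:size} to cap the total number of shore-writes), the cost of part (b) telescopes to $O(m)$, and together with the $O(n)=O(m)$ cost of part (a) the theorem follows.
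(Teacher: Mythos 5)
Your final fallback---charging every execution of the Line~\ref{li:traversal} loop body to a distinct vertex-in-biclique incidence and capping the grand total by the total size of the Galois lattice, which is $O(m)$ by Corollary~\ref{co:size}---is exactly the paper's proof, and it is the only argument you need: each ``add $v_i$ to $X(B')$'' operation writes a pair $(v_i,B')$ that survives (with $B'$ possibly enlarged) into $\LL(G)$, so the total number of such writes is at most $\sum_{B}\bigl(|X(B)|+|Y(B)|\bigr)\leqslant 4m-2n$, while the remaining per-iteration bookkeeping is $O(1)$ (the test at Line~\ref{li:test} reduces to checking whether the relevant shore of $\introd(v_k)$ is a singleton). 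Commit to that route and the theorem is proved.

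The alternative route you offer alongside it, the chain (i)--(iii), does not work, and several of its supporting claims are false. Step (iii) is the fatal one: a vertex $w$ can serve as the model of far more than $\deg_G(w)$ twins---in the star $K_{1,t}$ built as $v_1\in X$, $v_2$ pending on $v_1$, and $v_3,\dots,v_{t+1}$ twins of $v_2$, the vertex $v_2$ has degree $1$ but is named as model $t-1$ times---so the bound $\sum_i b(v_{k_i})\leqslant\sum_i(2\deg_G(v_{k_i})-1)$ does not collapse to $\sum_{v}O(\deg_G(v))=O(m)$. There are smaller errors as well: a false twin of $v_k$ lies on the \emph{same} shore as $v_k$ (equal neighborhoods in a bipartite graph force equal sides), not the opposite one; the up-set of $\introd(v_k)$ in $\HasseC(G_i)$ is a subtree, not necessarily a path (the laminar family in the proof of Theorem~\ref{th:grado} need not be a chain), although the traversal cost is still linear in the number of bicliques visited; and the monotonicity needed in (i) is that \emph{subsequent} additions never decrease $b_j(w)$---which does follow from Theorem~\ref{th:induction}, since every step either preserves each existing biclique (possibly enlarging it) or adds a new one---not that ``a twin addition never increases $b(\cdot)$''. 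Strip out (i)--(iii), keep only the Corollary~\ref{co:size} accounting, and you have the paper's argument.
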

\begin{proof}
The test in Line~\ref{li:test} can be performed in constant time, by just checking whether the appropriate
shore in $B$ has size one.
The loop in Line~\ref{li:traversal} is performed by a simple traversal of $H$, requiring overall linear time in the number of maximal bicliques in which $v_i$ must be added, since the traversed portion of $H$ is a tree. Thus, the overall time complexity is given by the number of vertices that are added to each maximal biclique, which is $O(m)$ by Corollary~\ref{co:size}.
\end{proof}


\section{Compact representation of neighborhoods and maximal bicliques}\label{se:arbo}

A BDH graph may contain up to $\Theta(n^2)$ edges---for example, any complete bipartite graph is a BDH graph. Anyway, the neighborhood of each vertex can be conveniently encoded by a compact representation. The following theorem, proved in~\cite{acfDAM}, shows that a BDH graph always is the incidence graph of arcs of an arborescence and a set of paths in the arborescence.
\begin{theorem}[see Apollonio \emph{et al.} \cite{acfDAM}, Theorem 5]\label{th:arbo}
Let $G$ be a BDH graph with shores $X$ and $Y$. There exist arborescences $T_X$ and $T_Y$ and bijections $\psi_X: X\rightarrow E(T_X)$ and $\psi_Y:Y\rightarrow E(T_Y)$ such that $\psi_X(N(y))$ is a directed path in $T_X$ for each $y \in Y$ and $\psi_Y(N(x))$ is a directed path in $T_Y$ for each $x \in X$.
\end{theorem}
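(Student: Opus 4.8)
The plan is to prove Theorem~\ref{th:arbo} constructively by induction along the reverse of a Bandelt--Mulder building sequence, mirroring the style already used in this paper (Theorem~\ref{th:induction} and Corollary~\ref{co:isatree}). For a BDH graph $G=G_n$ built from $G_{i-1}$ by adding $v_i$, I maintain arborescences $T_X^{(i)}$, $T_Y^{(i)}$ and bijections $\psi_X^{(i)}:X_i\to E(T_X^{(i)})$, $\psi_Y^{(i)}:Y_i\to E(T_Y^{(i)})$ with the property that each $\psi_X^{(i)}(N_i(y))$ is a directed path in $T_X^{(i)}$ and each $\psi_Y^{(i)}(N_i(x))$ is a directed path in $T_Y^{(i)}$. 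The base case $G_3\cong K_{1,2}$ is immediate: take $T_X$ a single arc if $|X_3|=1$ (two arcs sharing the root if $|X_3|=2$), and similarly for $T_Y$; the neighborhood of the degree-$2$ vertex is the whole two-arc path, the neighborhoods of the degree-$1$ vertices are single arcs.

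For the inductive step I would handle the two operations. If $v_i\in X_i$ is a \emph{twin} of $v_k$, then $N_i(v_i)=N_i(v_k)$, so no neighborhood on the $X$-side changes and no $Y$-neighborhood changes as a path-support (it just gains one more arc incident to the same place); concretely, in $T_X^{(i)}$ I duplicate the arc $\psi_X^{(i-1)}(v_k)=(p,q)$ by adding a new leaf arc $(p,q')$ and setting $\psi_X^{(i)}(v_i)=(p,q')$ — so $v_i$ and $v_k$ become ``parallel'' leaf-arcs below the same node — while $T_Y^{(i)}=T_Y^{(i-1)}$ with $\psi_Y$ updated so that every path $\psi_Y(N_i(x))$ for $x\in N_i(v_k)$ that previously ended at some arc now ends at the arc corresponding to the biclique structure; here I must be careful that inserting $v_i$ into $E(T_Y)$ keeps all the relevant $N(x)$'s as directed paths, which is where the interplay with the Galois-lattice tree structure of Corollary~\ref{co:isatree} is used. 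If $v_i\in X_i$ is \emph{pending} adjacent to $v_j\in Y_i$, then on the $Y$-side $\psi_Y(N_i(v_j))=\psi_Y(N_{i-1}(v_j))\cup\{\psi_Y(v_i)\}$ must remain a directed path, so I attach the new arc $\psi_Y(v_i)$ at the $\leqslant_{T_Y}$-greatest arc of the current path $\psi_Y(N_{i-1}(v_j))$ (which is well-defined since that set is already a directed path), extending it by one arc; on the $X$-side, $v_i$ has degree $1$, so I attach $\psi_X(v_i)$ as a fresh leaf arc placed so that the unique $Y$-neighbor condition is preserved — i.e.\ $\psi_X(v_i)$ is appended at the tip of the path $\psi_X(N_i(v_j))$, guaranteeing $\psi_X(N_i(y))$ stays a path for the affected $y=v_j$ and is unchanged for all other $y$.

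The cleanest way to organize and to avoid case explosion is to tie the two arborescences directly to the Galois lattice: by Corollary~\ref{co:isatree}, $\HasseC(G)$ is a tree, and one can take $T_Y$ essentially to be (a rooted orientation of) this tree augmented with leaf arcs indexing the join-irreducible introducers, identifying $\psi_Y(x)$ with the arc that ``introduces'' the attribute $x$, so that $N(x)$ — the set of attributes in the introducer of $x$ — corresponds exactly to the root-to-arc path; dually for $T_X$. Phrasing the construction through introducers makes each inductive step just the local modification to $\HasseC$ described in the proof of Theorem~\ref{th:induction} (either nothing changes, or one pending vertex is added to the Hasse tree), plus the bookkeeping of where the new graph-vertex's arc is glued in.

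The main obstacle I anticipate is the $Y$-side update in the twin case (and symmetrically the $X$-side in the pending case when $v_i\in Y$): adding $v_i$ to $E(T_Y)$ must simultaneously preserve, for \emph{every} $x\in N_i(v_k)$, that $\psi_Y(N_i(x))$ is still a directed path, and the different $x$'s have nested-but-not-equal neighborhoods (a laminar family, as shown in Theorem~\ref{th:grado}). One must argue that there is a single consistent place in $T_Y$ to insert the arc $\psi_Y(v_i)$ — namely at the node of $T_Y$ corresponding to the introducer of $v_i$ in $\HasseC(G_i)$ — so that all these paths are extended coherently; the laminarity from Theorem~\ref{th:grado} together with the tree structure from Corollary~\ref{co:isatree} is exactly what makes such a consistent insertion point exist and be unique. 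Verifying this compatibility carefully, and checking that $\psi_X,\psi_Y$ remain bijections onto the arc sets after each step, is the technical heart; the rest is routine induction.
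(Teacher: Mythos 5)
Your overall strategy --- induction along the Bandelt--Mulder sequence with a case split on pending versus twin and on the shore of the new vertex --- is exactly the paper's, but your construction in the twin case is wrong, and it is precisely the step you yourself flag as ``the technical heart'' that you leave unresolved. Concretely: if $v_i\in X$ is a twin of $v_k$ and you realize $\psi_X(v_i)$ as a new leaf arc $(p,q')$ \emph{parallel} to $\psi_X(v_k)=(p,q)$, then for every $y\in N_i(v_k)=N_i(v_i)$ the set $\psi_X(N_i(y))$ contains two arcs with the same tail $p$, and two such arcs can never lie on a common directed path of an arborescence; the path property is destroyed for every neighbor of $v_k$ the moment the twin is inserted. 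The paper's fix is simple and makes the whole ``consistent insertion point'' worry you raise evaporate: \emph{subdivide} the arc $\psi_X(v_k)$ into two consecutive arcs $\psi_X(v_i)$ and $\psi_X(v_k)$ by inserting a new node. Since $N_i(y)$ contains $v_i$ exactly when it contains $v_k$, every affected path is extended by one consecutive arc and remains a directed path; no appeal to laminarity (Theorem~\ref{th:grado}) or to the tree structure of $\HasseC(G)$ (Corollary~\ref{co:isatree}) is needed, and there is no uniqueness of an insertion point to argue. Note also that when $v_i$ is added to shore $X$, the arborescence $T_Y$ does not change at all (its arcs are in bijection with $Y$), so the ``$Y$-side update in the twin case'' you worry about does not exist; the only nontrivial updates are to the arborescence of the shore that receives the new vertex.

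Two smaller problems. First, your base case ``two arcs sharing the root'' for $K_{1,2}$ has the same defect: two arcs leaving the root do not form a directed path, so $\psi_X(N(y))$ for the degree-two vertex would not be a path; you need two \emph{consecutive} arcs (the paper instead starts the induction at $G_2\cong K_{1,1}$, which avoids the issue entirely). Second, your proposed identification of $\psi_Y(N(x))$ with a \emph{root-to-arc} path in an orientation of $\HasseC(G)$ is false in general: in Figure~\ref{fig:arborescence}, $N(v_4)$ maps to the dipath from $e_6$ to $e_8$, which does not start at the root. The correct statement is only that each $\psi_Y(N(x))$ is \emph{some} directed path, i.e.\ an interval of the arborescence order. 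Your pending-vertex case (append a new leaf arc at the $\leqslant_T$-greatest arc of the existing path of the unique neighbor) is correct and matches the paper.
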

By Theorem~\ref{th:arbo}, it is possible to store a pair of arborescences $T_X$ and $T_Y$ so that the neighborhood of each vertex in $X$ (resp., $Y$) is implicitly represented by the two extremes of a dipath in $T_Y$ (resp., $T_X$). Such an implicit representation still requires overall $O(n)$ worst case space and
generalizes to an arborescence what is possible on a path for \emph{convex bipartite graphs}, namely, bipartite graphs $G=(X,Y,E)$ for which a linear order $L$ on $Y$ exists such that $N(x)$ is an interval in $L$ for each $x \in X$. For convex bipartite graphs, each neighborhood can be implicitly represented by the extremes of the corresponding interval in $L$. Nonetheless, BDH graphs are not convex, neither they are $c$-convex (in the sense of~\cite{albano}) in general, even for small $c \in \mathbb{N}$.
\mybreak
The pair of arborescences mentioned in Theorem~\ref{th:arbo}, along with the corresponding bijections, can be computed by specializing the algorithm of Swaminathan and Wagner~\cite{SwaWA} that runs Bixby and Wagner's
algorithm~\cite{BixWA} for the \emph{Graph Realization Problem}---roughly: the recognition problem for graphic matroids--as a subroutine. The ensuing running time is $O(\alpha(\nu, m)\times m)$ where $\nu$ is the number of vertices in shore $Z$, $Z\in \{X,Y\}$, while $m$ is the number of edges of $G$ and  
$\alpha(\cdot,\cdot)$ is a function which grows very slowly and behaves essentially as a constant even for
large values of both its arguments.
\mybreak
We give here a simpler constructive proof of Theorem~\ref{th:arbo}, that also provides a much simpler and more efficient (though much less general) algorithm to compute the arborescence representation.

\vspace{2mm}
\begin{proof} \textbf{(of Theorem~\ref{th:arbo}, constructive)}
Since the role of the shores of $G$ is symmetrical, it suffices to prove the existence of an arborescence $T_Y$ and a bijection $\psi_Y$ fulfilling the thesis.\ The proof is carried out by induction on graphs $G_i$'s in the Bandelt and Mulder construction sequence of $G$. Let $X_i$ and $Y_i$, be the shores of $G_i$. For ease of notation we set $T_i=T_{Y_i}$ and $\psi=\psi_{Y_i}$. Hence we identify $\psi_{Y_i}$ with the restriction of $\psi$ on the vertices of $Y_i$.
We assume, w.l.o.g., that $v_1 \in X$ and $v_2 \in Y$. 
\mybreak
Graph $G_2$ is necessarily isomorphic to $K_{1,1}$.\ Thus the thesis trivially holds for $G_2$: $T_2$ consists of a single arc $e$, with $\psi(v_2) = e$.\ The neighborhood $Y_2$ of the unique vertex in $X_2$ is mapped into a path consisting of the unique arc $e \in E(T_2)$.
\mybreak
Assume the thesis holds for $G_{i-1}$, with $i > 2$. The neighborhood $N_{i-1}(v_k)$, for each $v_k \in  X_{i-1}$, is mapped by $\psi$ into a dipath in $T_{i-1}$.
When adding vertex $v_i$ we distinguish four cases, since $v_i$ can be added either to shore $X$ or to shore
$Y$, and it can be either a pending vertex or a twin vertex of an existing vertex.

\begin{enumerate}[(i)]
\item $v_{i}$ is a twin vertex in shore $X$: let $v_j\in X$ be a twin of $v_i$. The arborescence is unchanged.  Since $N_i(v_{i}) = N_i(v_j) = N_{i-1}(v_j)$, and $\psi(N_{i-1}(v_j))$ was a path in $T_{i-1}$, then $\psi (N_i(v_i))$ is a path in $T_i$.

\item\label{case:twinY} $v_{i}$ is a twin vertex in shore $Y$: let $v_j\in y$ be a twin of $v_i$. We subdivide arc $\psi(v_j)$ into two consecutive arcs $\psi(v_i)$ and $\psi(v_j)$, by adding a new vertex to the arborescence. If $v_j \in N_i(x)$ for some $x \in X_{i}$, then also $v_i \in N_i(x)$, hence $\psi(N_i(x))$ contains both arc $\psi(v_j)$ and arc $\psi(v_i)$. Any path containing $\psi(v_j)$ is thus extended to a path containing $\psi(v_i)$. Therefore, $\psi(N_i(x))$ is still a path in $T_i$, for each $x \in X_i$.

\item $v_{i}$ is a pending vertex in shore $X$: the arborescence is unchanged. The neighborhood $N_i(v_i)$ is a single vertex, so $\psi (N_i(v_i))$ is a path consisting of a single arc.

\item\label{case:pendingY} $v_{i}$ is a pending vertex in shore $Y$: let $N_i(v_{i}) = \{v_j\}$. Only the neighborhood of $v_j$ is changed, with $N_i(v_j) = N_{i-1}(v_j) \cup \{v_i\}$. We add a new vertex and a new arc $\psi(v_i) = e$ to the arborescence, so that arc $e$ is adjacent to the last arc in the path $\psi(N_{i-1}(v_j))$. Since $\psi(N_{i-1}(v_k))$ is a path in $T_{i-1}$, for $v_k \in X_{i-1} \setminus \{v_j\}$, then $\psi(N_i(v_k))$ is a path in $T_i$. Moreover, $\psi(N_{i-1}(v_j))$ is a path in $T_{i-1}$ as well; hence $\psi(N_i(v_j))$ is a path in $T_i$, consisting of the concatenation of $\psi(N_{i-1}(v_j))$ and $e$.
\end{enumerate}
The only cases in which arcs are added to $T_{i-1}$ are~(\ref{case:twinY}) and~(\ref{case:pendingY}). It is immediate to see that, in both cases, if $T_{i-1}$ is an arborescence then also $T_i$ is an arborescence.
\end{proof}
An example of the above construction is shown in Figure~\ref{fig:arborescence}.
\begin{figure}[h]
\noindent\hrulefill%
    \begin{center}
    	\def\svgwidth{15cm}
             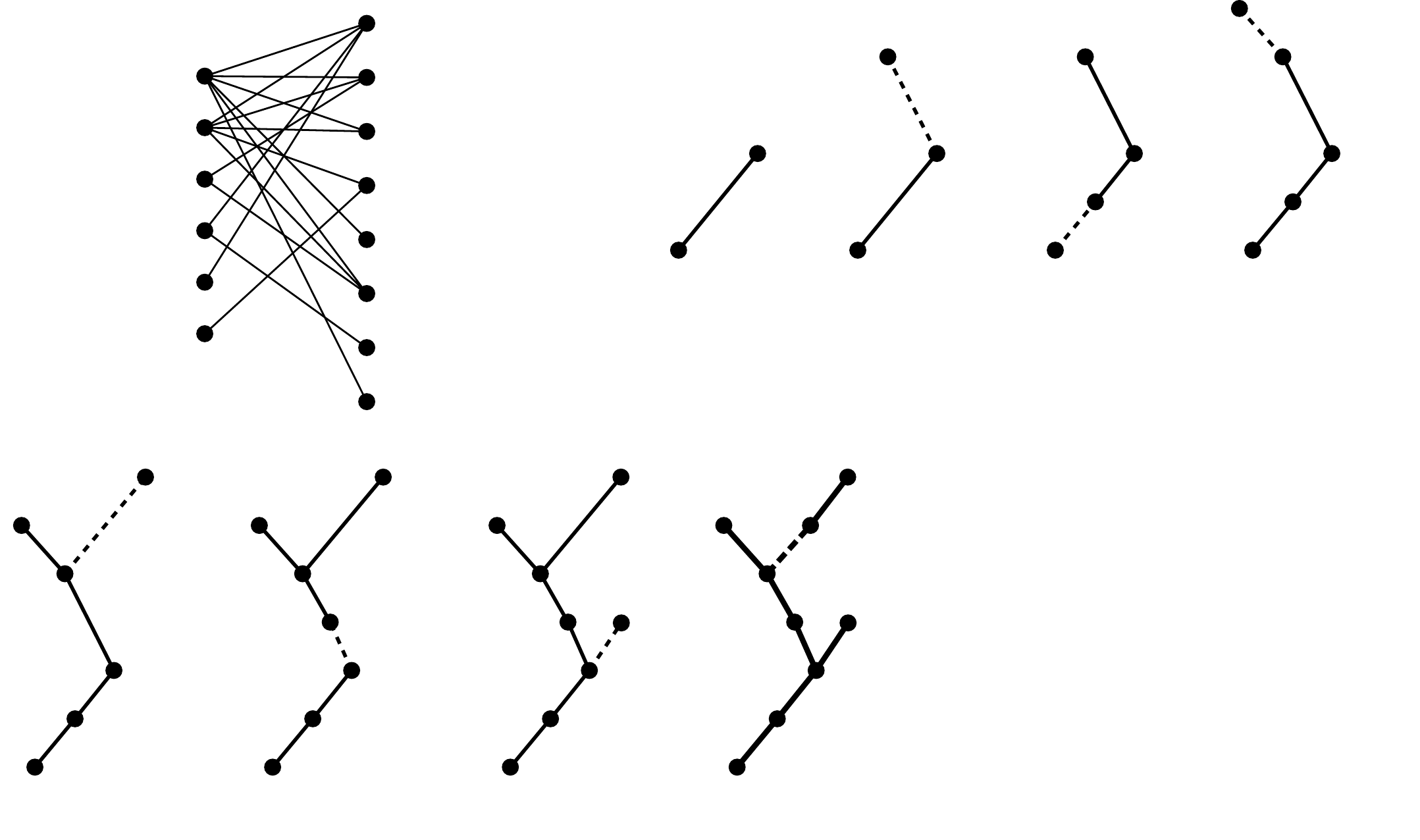
    \caption{A BDH graph and its supporting arborescence $T_Y$, where $Y$ is the shore on the right. The arborescence $T_Y$ is obtained incrementally under the addition in the graph of pending vertices and twin vertices $v_1, v_2, \ldots, v_{14}$, as described in the proof of Theorem \ref{th:arbo}.\comment{, in the order $A, 1, 2, B, C, 3, D, 4, E, F, 5, 6, 7, 8$} Labels $\pend(v)$ and $\twin(v)$ in the graph denote the insertion of a pending vertex adjacent to $v$ or a twin vertex of $v$. Arc $\psi(v_i)$ in the arborescence is labeled by $e_i$. Dashed arcs are the arcs added to each arborescence. Observe that the neighborhood of each vertex in $Y$ is mapped to a dipath in $T_Y$. For example, $N(v_7)$ is mapped to the dipath from $\alpha(N(v_7)) = e_{2}$ to $\beta(N(v_7)) = e_{13}$, while $N(v_4)$ is mapped to the dipath from $\alpha(N(v_4)) = e_{6}$ to $\beta(N(v_4)) = e_{8}$.}
    \protect\label{fig:arborescence}
    \end{center}
\noindent\hrulefill%
\end{figure}
\mybreak
Since each shore of a maximal biclique is an intersection of neighborhoods, and the intersection of dipaths in an arborescence is itself a dipath, we can encode each maximal biclique $(X(B),Y(B))$ of a BDH graph by means of a dipath in $T_X$ and a dipath in $T_Y$. It is therefore convenient to introduce a unique map $\psi:X \cup Y\rightarrow E(T_X) \cup E(T_Y)$ as follows: $\psi(v) = \psi_X(v)$ if $v \in X$ and  $\psi(v) = \psi_Y(v)$ if $v \in Y$. The following fact follows now straightforwardly.
\begin{corollary}\label{co:arbo}
Given a BDH graph $G$ with shores $X$ and $Y$, there exist two arborescences $T_X$ and $T_Y$, and a bijection $\psi:X \cup Y\rightarrow E(T_X) \cup E(T_Y)$, such that for each maximal biclique $B\in \LLC(G)$ we have that $\psi(X(B))$ is a directed path in $T_X$ and $\psi(Y(B))$ is a directed path in $T_Y$.
\end{corollary}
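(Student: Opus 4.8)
The plan is that this is essentially immediate from Theorem~\ref{th:arbo} and the remarks preceding the statement; I would just spell out the three ingredients. First I fix the arborescences $T_X,T_Y$ and the bijections $\psi_X\colon X\to E(T_X)$, $\psi_Y\colon Y\to E(T_Y)$ produced by Theorem~\ref{th:arbo}, and let $\psi\colon X\cup Y\to E(T_X)\cup E(T_Y)$ be their common extension ($\psi|_X=\psi_X$, $\psi|_Y=\psi_Y$); since $X\cap Y=\emptyset$ and $E(T_X)\cap E(T_Y)=\emptyset$, this $\psi$ is a bijection. It then remains to check the path condition for every $B\in\maxbic(G)$.

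Fix $B=(X(B),Y(B))\in\maxbic(G)$; by definition of biclique both shores are nonempty. The second ingredient is that $X(B)=\bigcap_{y\in Y(B)}N(y)$: the inclusion $\subseteq$ holds because each vertex of $X(B)$ is adjacent to all of $Y(B)$, and conversely if some $x$ lies in every $N(y)$, $y\in Y(B)$, then $(X(B)\cup\{x\},Y(B))$ is still a biclique, so maximality of $B$ forces $x\in X(B)$. Since $\psi_X$ is injective, $\psi(X(B))=\psi_X\bigl(\bigcap_{y\in Y(B)}N(y)\bigr)=\bigcap_{y\in Y(B)}\psi_X(N(y))$, and by Theorem~\ref{th:arbo} each $\psi_X(N(y))$ is a directed path of $T_X$; moreover this intersection is nonempty because $X(B)\neq\emptyset$.

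The third ingredient is that a nonempty intersection of directed paths in an arborescence $T$ is again a directed path. I would either cite this (it is the fact used informally just above the statement) or give the short argument: in the arborescence order $\leqslant_T$ the down-set of any arc is a chain, namely the unique root-to-arc dipath, so if two dipaths $[\alpha_1,\beta_1]$ and $[\alpha_2,\beta_2]$ share an arc $e$ then $\alpha_1,\alpha_2\leqslant_T e$ are comparable and the intersection coincides with the interval from $\max(\alpha_1,\alpha_2)$ to the meet $\beta_1\wedge\beta_2$, hence is a dipath; iterating over the finitely many $y\in Y(B)$ settles the general case. Applying this to $\bigcap_{y\in Y(B)}\psi_X(N(y))$ shows that $\psi(X(B))$ is a directed path of $T_X$, and the symmetric argument with the roles of $X$ and $Y$ exchanged (using $Y(B)=\bigcap_{x\in X(B)}N(x)$) shows that $\psi(Y(B))$ is a directed path of $T_Y$. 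The only genuinely non-bookkeeping point is this intersection lemma for dipaths in an arborescence, so that is the step I would be most careful about.
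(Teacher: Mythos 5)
Your proposal is correct and follows exactly the route the paper takes: the paper derives the corollary from Theorem~\ref{th:arbo} via the one-line observation that each shore of a maximal biclique is an intersection of neighborhoods and that a (nonempty) intersection of dipaths in an arborescence is again a dipath. You merely fill in the details the paper leaves implicit (the identity $X(B)=\bigcap_{y\in Y(B)}N(y)$, pushing the intersection through the injective $\psi_X$, and the interval argument for the dipath-intersection lemma), all of which are sound.
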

Let $G$ a BDH graph with shores $X$ and $Y$ and let $S\in\left(N(x),\, x\in X\right)\cup\left(N(y),\, y\in Y\right)$. Let $T\in \{T_X,T_Y\}$. Dipaths in $T$ are identified by intervals in the arborescence order induced by $T$ (recall Section~\ref{sec:prelim}).\ In particular, dipath $\psi(S)$ of $T$ is identified with $[\alpha(\psi(S)),\beta(\psi(S))]$.\ For ease of notation we set $\alpha(\psi(S))=\alpha(S)$ and $\beta(\psi(S))=\beta(S)$.\ Hence, a maximal biclique $B$ will be encoded by the two intervals $[\alpha(X(B)), \beta(X(B))]$ in $T_X$ and $[\alpha(Y(B)), \beta(Y(B))]$ in $T_Y$.

\section{An $O(n)$ time algorithm for computing a compact representation of the Galois lattice of a BDH graph}
The arborescence representation described in Theorem~\ref{th:arbo} and Corollary~\ref{co:arbo}, together with the $O(n)$ upper bound in Corollary~\ref{co:linearbiclique} on the number of
maximal bicliques in $\LLC(G)$, allows us to derive an $O(n)$ space encoding of the Galois lattice of a BDH graph. We show here how this encoding can be computed in $O(n)$ worst case time. An exhaustive listing of the $k$ vertices in each maximal biclique can still be obtained in optimal $O(k)$ time by traversing the compact representation.

Algorithm \algofast\ is listed in Figure~\ref{fi:fastcomputehasse}. Starting from the reverse of a pruning sequence of a BDH graph $G$, it computes the two supporting arborescences $T_X, T_Y$
in Theorem~\ref{th:arbo} and an implicit representation of $\HasseC(G)$.
Each maximal biclique $B=(X(B), Y(B))$ in $\HasseC(G)$ is implicitly represented by the two intervals
$[\alpha(X(B)), \beta(X(B))]$ and $[\alpha(Y(B)), \beta(Y(B))]$. The list of the $k$ arcs in an interval can be retrieved in $O(k)$ time by a simple walk in the arborescence, starting from $\beta(\cdot)$ and following parent pointers to $\alpha(\cdot)$.
Thus, the Hasse diagram of the Galois lattice can be represented in $O(n)$ space, also including the two arborescences needed to list vertices in maximal bicliques when required.

The algorithm we propose also computes, for each vertex $v \in X$ (resp., $v \in Y$), its introducer. This allows us to retrieve all the $p$ maximal bicliques containing $v$ in time $O(p)$.
 
Algorithm \algofast\ follows the same steps as Algorithm \algo\ but, when a vertex in the reverse of the pruning sequence is processed, in addition to updating $\HasseC(G_{i-1})$ to $\HasseC(G_i)$, also the two arborescences $T_X$ and $T_Y$ are updated according to the proof of Theorem~\ref{th:arbo}.

For each maximal biclique $B=(X(B), Y(B))$ in $\HasseC(G_i)$, we maintain the following information:
\begin{itemize}
\item the set of vertices in $X(B)$, represented through the end-arcs $\alpha(X(B)),\beta(X(B))$ of the associated dipath in $T_X$;
\item the set of vertices in $Y(B)$, represented through the end-arcs $\alpha(Y(B)),\beta(Y(B))$ of the associated dipath in $T_Y$;
\item the list of maximal bicliques covered by $B$ in $\HasseC(G_i)$;
\item the list of maximal bicliques that cover $B$ in $\HasseC(G_i)$.
\end{itemize}
Moreover, for each vertex $v_k$ in $G_i$, we store a reference to $\introd(v_k)$ in $\HasseC(G_i)$.

In the algorithm we only show how to process pending vertices and twin vertices in $X$, the algorithm and the data structures being completely symmetric with respect to swapping shore $X$ for shore $Y$.

\begin{figure}[ht]
\noindent\hrulefill%
\begin{prog}{pr:fasthasse}
given a BDH graph $G$ and the reverse of a pruning sequence\\
for $G$ $[(v_2, C_2, v_{k_2}), (v_3, C_3, v_{k_3}), \ldots, (v_n, C_n, v_{k_n})],$\\
compute:\\
-- $\HasseC(G)$\\
-- the arborescences $T_X$ and $T_Y$ representing vertices in $X$ and $Y$ as in Theorem~\ref{th:arbo}\\
~~~in which $\psi(v_i)$ is the arc denoted by $e_i$\\
\vspace{0.3cm}\\
for each vertex $v$ in $G_i$ we maintain a reference to $\introd(v)$ in $\HasseC(G_i)$;\\
for each maximal biclique $B = (X(B), Y(B)) \in \HasseC(G_i)$ we maintain:\\
-- the list of maximal bicliques covered by $B$ in $\HasseC(G_i)$;\\
-- the list of maximal bicliques that cover $B$ in $\HasseC(G_i)$;\\
-- the end-arcs $\alpha(X(B)), \beta(X(B))$ of the dipath in $T_X$ representing $X(B)$;\\
-- the end-arcs $\alpha(Y(B)), \beta(Y(B))$ of the dipath in $T_Y$ representing $Y(B)$.\\
\vspace{0.3cm}\\
/* w.l.o.g., we assume $v_1 \in X$ and $v_2 \in Y$ */\\
\N  $H \leftarrow$ a single biclique $B = (\{v_1\},\{v_2\})$\\
\N  $\introd(v_1) \leftarrow B$\\
\N  $\introd(v_2) \leftarrow B$\\
\N $T_X \leftarrow \mbox{a single arc } e_1$\\
\N $T_Y \leftarrow \mbox{a single arc } e_2$\\
\N  $\alpha(X(B)) \leftarrow e_1$; $\beta(X(B)) \leftarrow e_1$\\
\N  $\alpha(Y(B)) \leftarrow e_2$; $\beta(Y(B)) \leftarrow e_2$\\
\\
\N  \key{for}$i=3$ to $n$\\
  /*\\
  for the sake of simplicity we assume $v_i \in X$\\
  changes in case $v_i \in Y$ are straightforward, except the change in Line \ref{li:x} (see Line~\ref{li:y})\\
  */\\
\N  \>  \key{if}$v_i$ is a twin vertex in $G_i$\\

\N  \>  \>  \key{let}$v_k$ be the twin vertex of  $v_i$ in $G_i$\\
\N  \>  \>  \key{update}$T_X$ by splitting $e_k$ into two arcs $e_i, e_k$, with $e_i \prec_{T_X} e_k$\\
\NL{li:inizioblocco}  \>  \>  \key{for each}maximal biclique $B'=(X(B'), Y(B'))$ in $H$ with $\alpha(X(B')) = e_k$\\
\NL{li:fineblocco}  \>  \>  \>  $\alpha(X(B')) \leftarrow e_i$\\
\N  \>  \>  $\introd(v_i) \leftarrow \introd(v_k)$\\

\N  \>  \key{else} /* $v_i$ is a pending vertex in $G_i$ */\\

\N  \>  \>  \key{let}$v_k$ be the vertex adjacent to $v_i$\\
\N  \>  \>  \key{append}a new arc $e_i$ in $T_X$ as a leaf above $\beta(X(\introd(v_k)))$\\ 
\N  \>  \>  \key{if} $\introd(v_k) = (N_{i-1}(v_k), v_k)$ /* i.e., $\alpha(Y(\introd(v_k))) = \beta(Y(\introd(v_k)))$ */\\
\N  \>  \>  \>  $\beta(X(\introd(v_k))) \leftarrow e_i$\\ 
\N  \>  \>  \>  $\introd(v_i) \leftarrow \introd(v_k)$\\
\N  \>  \>  \key{else} /* $\introd(v_k) \not= (N_{i-1}(v_k), v_k)$ */\\
\N  \>  \>  \>  \key{create}a new maximal biclique $B'$\\
\NL{li:x}  \>  \>  \>  \key{add}$B'$ to $H$ as a leaf so that $\introd(v_k) \prec B'$\\
\NL{li:y}  \>  \>  \>  /* in case $v_i \in Y$: \key{add}a leaf $(v_k,N_{i}(v_k))$ so that $(v_k,N_{i}(v_k)) \prec \introd(v_k)$ */\\
\N  \>  \>  \>  $\alpha(X(B')) \leftarrow \alpha(X(\introd(v_k)))$ /* because $B' = (N_{i-1}(v_k) \cup \{v_i\}, v_k)$ */\\
\N  \>  \>  \>  $\beta(X(B')) \leftarrow e_i$\\
\N  \>  \>  \>  $\alpha(Y(B')) \leftarrow e_k$\\
\N  \>  \>  \>  $\beta(Y(B')) \leftarrow e_k$\\
\N  \>  \>  \>  $\introd(v_k) \leftarrow (N_{i}(v_k), v_k)$\\
\N  \>  \>  \>  $\introd(v_i) \leftarrow (N_{i}(v_k), v_k)$\\

\N  \key{end for}\\
\N  \key{return}$H$\\
\end{prog}

\caption{Algorithm \algofast.}
\protect\label{fi:fastcomputehasse}
\noindent\hrulefill%
\end{figure}

\begin{theorem}
Starting from the pruning sequence of a BDH graph $G$ on $n$ vertices, an implicit representation of its Galois lattice can be computed in $O(n)$ worst case time and space. Retrieving the $p$ vertices in each maximal biclique requires $O(p)$ worst case time, and retrieving the $k$ maximal bicliques containing a given vertex requires $O(k)$ worst case time.
\end{theorem}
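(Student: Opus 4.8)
The plan is to analyse Algorithm \algofast\ directly, establishing first its correctness and then the claimed $O(n)$ time and space bounds, and finally the two retrieval bounds. For correctness I would argue by induction on $i$ that, after the $i$-th iteration, (a)~$H=\HasseC(G_i)$ with the cover relation stored explicitly through the ``covers''/``is covered by'' lists; (b)~$T_X$ and $T_Y$ are exactly the arborescences produced by the constructive proof of Theorem~\ref{th:arbo} for $G_i$; (c)~for every maximal biclique $B$ of $G_i$ the stored pairs $\alpha(X(B)),\beta(X(B))$ and $\alpha(Y(B)),\beta(Y(B))$ are the endpoints, in the respective arborescence orders, of the dipaths $\psi_X(X(B))$ and $\psi_Y(Y(B))$ --- these are indeed dipaths because shores of maximal bicliques are intersections of neighborhoods and intersections of dipaths in an arborescence are dipaths, by Corollary~\ref{co:arbo}; and (d)~$\introd(v)$ points to the lowest (resp.\ highest) maximal biclique of $G_i$ containing $v$. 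The base case $i=2$ is the explicitly initialised $K_{1,1}$. For the inductive step there are four cases --- a twin or a pending vertex, entering $X$ or $Y$ --- and in each the update performed by the algorithm is precisely the combination of the corresponding case in the proof of Theorem~\ref{th:induction} (how $\HasseC$ changes: either nothing, or the addition of one biclique and one cover pair, the new biclique being the introducer of the two involved vertices) with the corresponding case in the constructive proof of Theorem~\ref{th:arbo} (how $T_X,T_Y$ change: either nothing, or one arc split, or one leaf appended). The one point needing a short separate check is that when a twin $v_i$ of $v_k$ enters $X$ and $e_k$ is split into $e_i\dot\prec_{T_X}e_k$, the dipaths whose topmost arc was $e_k$ are exactly those that acquire $e_i$ as new topmost arc, while no other endpoint pointer changes; this is immediate since $v_i\in N(y)$ whenever $v_k\in N(y)$ and $e_i$ is the new parent of $e_k$, and it is what Lines~\ref{li:inizioblocco}--\ref{li:fineblocco} implement.

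For the space bound: by Corollary~\ref{co:linearbiclique} there are at most $n-2$ maximal bicliques, by Corollary~\ref{co:isatree} the diagram $\HasseC(G)$ is a tree and hence has fewer than $n$ cover pairs, $T_X$ and $T_Y$ have $|X|$ and $|Y|$ arcs respectively, and every maximal biclique and every vertex carries only $O(1)$ additional fields; so the whole structure occupies $O(n)$ space.

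For the time bound the key is that each of the $n-2$ iterations runs in $O(1)$ worst-case time. Splitting an arc of an arborescence, appending a leaf, creating a new biclique together with its unique new cover pair, and reassigning a constant number of $\alpha/\beta$ and $\introd$ fields are all constant-time pointer operations, and initialisation is $O(1)$. The only loop that is not a priori constant time is the traversal in Lines~\ref{li:inizioblocco}--\ref{li:fineblocco}, and here I would introduce the single auxiliary structure needed beyond what the pseudocode names: for each arc $e$ of $T_X$ maintain the bucket of those maximal bicliques $B$ with $\alpha(X(B))=e$ as a doubly linked list whose header is owned by $e$, with each biclique holding a handle to its list node and each header pointing back to its owner arc; maintain such buckets on $T_Y$ symmetrically. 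Then, by the observation in the correctness part, Lines~\ref{li:inizioblocco}--\ref{li:fineblocco} do nothing but move the whole bucket of $e_k$ onto the freshly created arc $e_i$, which is $O(1)$ (hand the header over to $e_i$, give $e_k$ a new empty header), and reading $\alpha(X(B))$ is a constant-time dereference through the biclique's list node. Hence every iteration is $O(1)$ and the algorithm runs in $O(n)$ time. I expect this loop --- arranging that a twin insertion touches only a worst-case constant amount of state --- to be the one genuine obstacle; everything else is a line-by-line transcription of Theorems~\ref{th:induction} and~\ref{th:arbo}.

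Finally the two retrieval bounds. Listing the $p=|X(B)|+|Y(B)|$ vertices of a maximal biclique $B$ is done by walking $T_X$ from $\beta(X(B))$ to $\alpha(X(B))$ and $T_Y$ from $\beta(Y(B))$ to $\alpha(Y(B))$ along parent pointers, applying $\psi_X^{-1}$ and $\psi_Y^{-1}$ (stored at each arc); a dipath with $p$ arcs is traversed in $O(p)$ steps. For the $k$ maximal bicliques containing a fixed $v\in X$ (the case $v\in Y$ is symmetric), note that $B\succeq\introd(v)$ iff $X(\introd(v))\subseteq X(B)$, which forces $v\in X(B)$ since $v\in X(\introd(v))$; conversely every biclique containing $v$ is $\succeq\introd(v)$ because $\introd(v)$ is the lowest such biclique. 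Thus the bicliques containing $v$ form exactly the principal up-set of $\introd(v)$, and since $\HasseC(G)$ is a tree (Corollary~\ref{co:isatree}) this up-set induces a subtree with $k$ nodes and $k-1$ cover pairs (any saturated chain from $\introd(v)$ up to a biclique in the up-set stays inside it); a search from $\introd(v)$ following only cover arcs upward therefore visits it in $O(k)$ time.
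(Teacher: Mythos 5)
Your proposal is correct and follows essentially the same route as the paper's proof: the one nontrivial point --- making the bulk $\alpha$-update at Lines~\ref{li:inizioblocco}--\ref{li:fineblocco} run in constant time --- is resolved in the paper by the same indirection idea (a single shared reference to the common $\alpha(\cdot)$ value for all bicliques that share it, rather than your per-arc bucket with a transferable header), and the space bound and the two retrieval arguments (parent-pointer walks in $T_X,T_Y$; upward/downward tree traversal from $\introd(v)$) coincide with the paper's. Your explicit correctness induction and your justification that the bicliques containing $v$ form exactly the principal up-set of $\introd(v)$ are more detailed than the paper's terse proof, but structurally identical.
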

\begin{proof}
It is immediate to see that each step in Algorithm \algofast,  except for the loop in Lines~\ref{li:inizioblocco} and \ref{li:fineblocco}, needs constant time per vertex to update the two arborescences $T_X$, $T_Y$. This because the Hasse diagram $\HasseC(G)$ contains, for each maximal biclique $B=(X', Y')$, its implicit representation $\alpha(X')$, $\beta(X')$, $\alpha(Y')$ and $\beta(Y')$.
Concerning Lines~\ref{li:inizioblocco} and \ref{li:fineblocco}, instead of updating the value of $\alpha(X(B'))$ for each maximal biclique in $H$ with $\alpha(X(B')) = e_k$, we store a single reference to the $\alpha(\cdot)$ value for all maximal bicliques sharing the same value of $\alpha(\cdot)$ (analogously for $\alpha(Y(B'))$), thus the set of updates in Lines~\ref{li:inizioblocco} and \ref{li:fineblocco} can be performed in constant time by just substituting that reference.

The set of vertices in the $X$ shore of a maximal biclique $B$ can be listed by traversing $T_X$ starting from $\beta(X(B))$, following parent pointers, until $\alpha(X(B))$ is reached,
and analogously for the $Y$ shore on $T_Y$. The set of maximal bicliques containing vertex $v \in X$ (resp., $v \in Y$) can be reached by traversing $\HasseC(G)$ upward (resp., downward) starting from $\introd(v)$. Since $\HasseC(G)$ is a tree, each maximal biclique is reached only once during the traversal.
\end{proof}

\bibliographystyle{plain}
\bibliography{BDHlattice}
\end{document}